\newtheorem{theorem}{Theorem}
\newtheorem{lemma}{Lemma}
\newtheorem{proposition}{Proposition}
\newtheorem{definition}{Definition}
\newtheorem{example}{Example}
\title{Assigning a Small Agreeable Set of Indivisible Items to Multiple Players\footnote{A preliminary version of this paper appeared in Proceedings of the 25th International Joint Conference on Artificial Intelligence, July 2016.}}
\author{
Warut Suksompong\\Stanford University
}\date{\vspace{-5ex}}
\begin{document}
	
\maketitle

\begin{abstract}
We consider an assignment problem that has aspects of fair division as well as social choice. In particular, we investigate the problem of assigning a small subset from a set of indivisible items to multiple players so that the chosen subset is \emph{agreeable} to all players, i.e., every player weakly prefers the chosen subset to any subset of its complement. For an arbitrary number of players, we derive tight upper bounds on the size for which a subset of that size that is agreeable to all players always exists when preferences are monotonic. We then present polynomial-time algorithms that find an agreeable subset of approximately half of the items when there are two or three players and preferences are responsive. Our results translate to a 2-approximation on the individual welfare of every player when preferences are subadditive.
\end{abstract}

\section{Introduction}

Consider the following assignment problem with aspects of fair division as well as social choice: A group of agents are going together on a road trip, and they have to collectively agree upon a subset of items to bring with them in a bag that can hold a limited number of items. Naturally, the agents have different interests on different items, but they always like a set at least as much as any subset of it. The set of items included in the bag should be selected in such a way that any agent weakly prefers it to the set of items left at home. How large does the bag need to be in order to guarantee that there always exists such a subset---an \emph{agreeable} subset---that fits within it? And how can we efficiently compute an agreeable subset?

Our notion of agreeability is closely related to \emph{envy-freeness}, a well-studied notion in fair division \cite{BouveretLe14,BramsFeMo12,ChevaleyreDuEn06,CohlerLaPa11,KashPrSh13}, and variants of agreeability have been considered in the literature \cite{AzizGaMa15,BouveretEnLa10,BramsKiKl12}. An envy-free allocation is one in which every player weakly prefers her bundle to that of any other player. Both agreeability and envy-freeness involve players being happy if their allocation is at least as good as some other allocation(s). In previous work on fair division, each allocation is typically assigned to a single player, and the player compares the allocation that she receives with the allocations that other players receive. A novelty in our setting is the introduction of a social choice aspect to the allocation problem, i.e., we assign the same subset of items to multiple players with possibly wildly differing preferences. This corresponds to the social choice framework of aggregating individual preferences to reach a collective decision for the whole group \cite{ArrowSeSu02,BrandtCoEnLaPr16}.\footnote{In fact, fair division can also be viewed as a social choice problem in the sense that a collective decision, concerning the allocation of the resources to the different players, has to be reached.} However, the envy-freeness condition is one-directional in our setting---there is no envy associated to the unassigned set of items. Our setting can also be seen as a multiagent knapsack problem in which every item has the same size and the agreeability condition is imposed, and it bears a resemblance to the consensus halving and necklace bisection problems \cite{AlonWe86,SimmonsSu03}.


For an arbitrary number of players, we derive tight upper bounds on the size for which a subset of that size that is agreeable to all players always exists when preferences are monotonic. Remarkably, even if the subset of items needs to be allocated to several players, the number of extra items that we might need to choose in order to accommodate all of them is quite small, i.e., half an item per additional player. We then present polynomial-time algorithms that compute an agreeable subset of approximately half of the items when there are two or three players and preferences are responsive. Our results translate to a 2-approximation on the individual welfare of every player when preferences are subadditive.

The assumptions that we make on the players' preferences, monotonicity and responsiveness, are practical in a broad range of settings. Monotonicity says that a player cannot be worse off whenever an item is added to her set, while the stronger notion of responsiveness says that a player cannot be worse off whenever an item is added to her set or replaced by another item that she weakly prefers to the original item. These assumptions are common in the literature\footnote{For a comprehensive treatment of properties concerning the ranking of sets of objects, we refer to \cite{BarberaBaPo04}.} and have been made, for example, in \cite{BramsFi00,BramsKiKl12,Aziz15,LiptonMaMo04}, with the latter two only assuming monotonicity. For our existence results we will only assume monotonicity, and we will assume responsiveness for our algorithms. 

Even though the need to assign as many as half of the items even for two or three players might not seem that impressive at first glance, the reader should bear in mind that the conditions that we impose on the valuation functions are not many. For instance, monotonicity and responsiveness are generalizations of \emph{additivity}, a very common assumption on preferences (e.g., \cite{DickersonGoKa14}). Since we consider very general valuation functions, it is natural that the size of the selected set is relatively large. Indeed, in the simple case where we want to assign items to a single player whose preference only depends on the number of items and who prefers more items than less, we already need to assign to her at least half of the items. In this light, our result that we only need to assign half an item extra in the worst case for each additional player is quite intriguing.

An important issue when we discuss algorithms for computing desirable allocations is how we represent the players' preferences. Since preferences on subsets, unlike preferences on single items, might not have a succinct representation, this would already disallow algorithms that run in polynomial time in the number of players and the number of items if the algorithm were required to read the whole preference input. To circumvent this problem, we instead assume the existence of a preference oracle. In particular, we allow the algorithm to make a polynomial number of queries to the preference oracle. In each query, the algorithm can specify a player and two subsets of items to the preference oracle, and the oracle reveals the preference of that player between the two subsets. The use of the oracle model is common in the computer science literature \cite{ArmstrongSaBo12,JainKoVa11,McmahanGoBl03}.


\section{Definitions and notation}
\label{sec:definitions}

In this section, we introduce the setting and give definitions and notation that we will use throughout this paper.

We consider $n$ players, numbered $1,2,\ldots,n$, who must collectively choose a subset of the set $S=\{x_1,x_2,\ldots,x_m\}$ of $m$ indivisible items. Denote by $\mathcal{S}$ the set of all subsets of $S$. Each player $i$ is endowed with a preference relation $\succeq_i$, a reflexive, complete, and transitive ordering over $\mathcal{S}$. Let $\succ_i$ denote the strict part and $\sim_i$ the indifference part of the relation $\succeq_i$. For items $x$ and $y$, we will sometimes abuse notation and write $x\succeq y$ to mean $\{x\}\succeq\{y\}$.

Next, we define two properties of preferences that we will consider in the paper. Both properties are standard in the literature. The first property, monotonicity, says roughly that ``more is better'', i.e., a player cannot be worse off whenever an item is added to her set.

\begin{definition}
A preference $\succeq$ on $\mathcal{S}$ is \emph{monotonic} if $T\cup\{x\}\succeq T$ for all $T\subseteq S$.
\end{definition}

Note that if $x\in T$, then $T\cup\{x\}\succeq T$ always holds, so we only need to check when $x\in S\backslash T$.

We will assume throughout most of the paper that preferences are monotonic. Monotonicity is a natural assumption in a broad range of situations. In particular, it implies free disposal of items, i.e., every item is considered to be of nonnegative value to each player.

The next property, responsiveness, says that a player cannot be worse off whenever an item is added to her set or replaced by another item that she weakly prefers to the original item. While stronger than monotonicity, responsiveness is still a reasonable assumption in many settings.

\begin{definition}
A preference $\succeq$ on $\mathcal{S}$ is \emph{responsive} if it satisfies the following two conditions:
\begin{itemize}
\item $\succeq$ is monotonic;
\item $T\backslash \{y\}\cup \{x\}\succeq T$ for all $T\subseteq S$ and $x,y$ such that $x\succeq y$, $x\not\in T$ and $y\in T$.
\end{itemize}
\end{definition}

Given two sets $T_1=\{x_{i_1},x_{i_2},\ldots,x_{i_k}\}$ with $i_1<i_2<\cdots<i_k$ and $T_2=\{x_{j_1},x_{j_2},\ldots,x_{j_k}\}$ with $j_1<j_2<\cdots<j_k$, $T_1$ is said to \emph{lexicographically dominate} $T_2$ if the following two statements hold for some $l\in\{1,2,\ldots,k\}$: 
\begin{enumerate}
\item $i_{l'}=j_{l'}$ for all $1\leq l'<l$; 
\item $i_l<j_l$.
\end{enumerate}

For instance, one can check that the set $\{x_1,x_3,x_5,x_8\}$ lexicographically dominates the set $\{x_1,x_3,x_6,x_7\}$, while the set $\{x_4,x_5\}$ does not lexicographically dominate the set $\{x_3,x_{10}\}$.

We now consider two examples of preferences. 

\begin{example}
\label{ex:numpref}
Consider the preference $\succeq_1$ on $\mathcal{S}$ defined as follows. For distinct subsets $T_1,T_2\subseteq S$, we have $T_1\succ_1 T_2$ if one of the following two conditions holds:
\begin{itemize}
\item $|T_1|>|T_2|$;
\item $|T_1|=|T_2|$ and $T_1$ lexicographically dominates $T_2$.
\end{itemize}

For instance, when $m=3$, the preference $\succeq_1$ is given by \[\{x_1,x_2,x_3\}\succ_1\{x_1,x_2\}\succ_1\{x_1,x_3\}\succ_1\{x_2,x_3\}\succ_1\{x_1\}\succ_1\{x_2\}\succ_1\{x_3\}\succ_1\emptyset.\]
\end{example}

\begin{example}
\label{ex:onepref}
Consider the preference $\succeq_2$ on $\mathcal{S}$ defined as follows. For distinct subsets $T_1,T_2\subseteq S$, we have $T_1\succ_2 T_2$ if one of the following three conditions holds:
\begin{itemize}
\item $x_1\in T_1$ and $x_1\not\in T_2$;
\item $x_1\in T_1,T_2$ or $x_1\not\in T_1,T_2$, and $|T_1|>|T_2|$;
\item $x_1\in T_1,T_2$ or $x_1\not\in T_1,T_2$, and $|T_1|=|T_2|$ and $T_1$ lexicographically dominates $T_2$.
\end{itemize}

For instance, when $m=3$, the preference $\succeq_2$ is given by \[\{x_1,x_2,x_3\}\succ_2\{x_1,x_2\}\succ_2\{x_1,x_3\}\succ_2\{x_1\}\succ_2\{x_2,x_3\}\succ_2\{x_2\}\succ_2\{x_3\}\succ_2\emptyset.\]
\end{example}

One can check that the preferences in Examples \ref{ex:numpref} and \ref{ex:onepref} are both monotonic and responsive. Moreover, they share the same preference on the single items in $S$, i.e., $x_1\succ x_2\succ\dots\succ x_m$.

Next, we define the notion of agreeability. We denote the complement of a set $T$ by $-T:=S\backslash T$.

\begin{definition}
\label{def:fiftypercent}
Player $i$ regards a subset $T\subseteq S$ as \emph{agreeable} if $T\succeq_i -T$.
\end{definition}

When preferences are monotonic, agreeability also implies that the player does not prefer any subset of the complement to her current set. That is, we have $T\succeq_i U$ for any $U\subseteq -T$.

We now define two properties of subsets of items based on the preference on single items. In general, we will use $\succeq$ to denote the preference on $\mathcal{S}$ and $\succeq^{sing}$ to denote the preference on the single items in $S$. The first property, possible agreeability, says that a subset is ``strictly agreeable'' for some responsive preference consistent with the preference on single items.

\begin{definition}
Fix a preference $\succeq^{sing}$ on the single items in $S$. A subset $T\subseteq S$ is \emph{possibly agreeable} with respect to $\succeq^{sing}$ if there exists a responsive preference $\succeq$ on $\mathcal{S}$ consistent with $\succeq^{sing}$ such that $T\succ -T$.
\end{definition}

The next property, necessary agreeability, concerns the situation in which a subset is agreeable for any responsive preference consistent with the preference on single items.

\begin{definition}
Fix a preference $\succeq^{sing}$ on the single items in $S$. A subset $T\subseteq S$ is \emph{necessarily agreeable} with respect to $\succeq^{sing}$ if $T\succeq -T$ for any responsive preference $\succeq$ on $\mathcal{S}$ consistent with $\succeq^{sing}$.
\end{definition}

We now make a connection to the model in which every player has a cardinal utility for each subset of items. A \emph{utility function} $f$ is a function that maps any subset of items to a nonnegative real number. We assume that utility functions are \emph{monotonic}, i.e., $f(T_1)\leq f(T_2)$ for all subsets of items $T_1\subseteq T_2$. A utility function $f$ is said to be \emph{additive} if $f(T_1\cup T_2)=f(T_1)+f(T_2)$ for all disjoint subsets of items $T_1$ and $T_2$, and is said to be \emph{subadditive} if $f(T_1\cup T_2)\leq f(T_1)+f(T_2)$ for all subsets of items $T_1$ and $T_2$. Any nonnegative monotonic additive utility function is also subadditive. Subadditive utility functions have been widely considered in the literature \cite{Feige09,BhawalkarRo11}.

When the preferences of the players are given by subadditive utility functions, a subset that a player regards as agreeable also gives the player a utility of at least $1/2$ of the utility that the player obtains from the whole set $S$ of items. Indeed, for any such subset $T$ we have $f(S)=f(T\cup -T)\leq f(T)+f(-T)\leq 2f(T)$, which implies that $f(T)\geq\frac{1}{2}f(S)$. Hence our results concerning subsets that are agreeable to every player also translate to a 2-approximation on the individual welfare of every player when preferences are subadditive.

\section{Possibly and necessarily agreeable subsets}
\label{sec:feasnecc}

In this section, we investigate possibly and necessarily agreeable subsets and provide characterizations for such sets. The characterizations will later be useful in the analyses of our algorithms for computing an agreeable subset when there are two or three players. The following proposition shows a relationship between the two classes of subsets.

\begin{proposition}
\label{prop:feasnecciff}
Fix a preference $\succeq^{sing}$ on the single items in $S$. A subset $T\subseteq S$ is possibly agreeable with respect to $\succeq^{sing}$ if and only if $-T$ is not necessarily agreeable with respect to $\succeq^{sing}$.
\end{proposition}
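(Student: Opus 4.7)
The plan is to show that both sides of the equivalence unfold to \emph{exactly the same} existential statement, so the proof reduces to a single observation about completeness of the preference relation.

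First I would unfold the right-hand side. By definition, $-T$ is necessarily agreeable with respect to $\succeq^{sing}$ iff $-T \succeq -(-T) = T$ for every responsive preference $\succeq$ on $\mathcal{S}$ consistent with $\succeq^{sing}$. Hence $-T$ is \emph{not} necessarily agreeable iff there exists some responsive preference $\succeq$ consistent with $\succeq^{sing}$ for which $-T \succeq T$ fails.

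Next I would unfold the left-hand side: $T$ is possibly agreeable iff there exists a responsive preference $\succeq$ consistent with $\succeq^{sing}$ with $T \succ -T$. The key step is to note that since $\succeq$ is complete (and $\succ$ is defined as the strict part of $\succeq$), the condition $T \succ -T$ is equivalent to the negation of $-T \succeq T$. Thus the two existential statements coincide verbatim, yielding the biconditional in a single line.

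I do not expect any genuine obstacle here; the whole content of the proposition is that possible and necessary agreeability are dual notions in the standard logical sense, and the only point worth spelling out explicitly is the equivalence $T \succ -T \iff \neg(-T \succeq T)$, which follows from completeness. I would therefore present the argument as a short two-direction chain of biconditionals rather than splitting it into separate ``only if'' and ``if'' paragraphs.
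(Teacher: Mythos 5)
Your proposal is correct and matches the paper's argument: both hinge on the equivalence $T \succ -T \iff \neg(-T \succeq T)$, which holds by completeness of $\succeq$, applied under the existential quantifier over responsive preferences consistent with $\succeq^{sing}$. The paper merely presents the same reasoning as two separate implications rather than as a single chain of biconditionals.
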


\begin{proof}
Suppose first that $T$ is possibly agreeable with respect to $\succeq^{sing}$. By definition, there exists a responsive preference $\succeq$ on $\mathcal{S}$ consistent with $\succeq^{sing}$ such that $T\succ -T$. This means that $-T\succeq T$ does not hold for the responsive preference $\succeq$. Hence $-T$ is not necessarily agreeable with respect to $\succeq^{sing}$.

The converse can be shown similarly.
\end{proof}

We now characterize possibly and necessarily agreeable subsets. 

\begin{proposition}
\label{prop:feasneccchar}
Fix a preference $\succeq^{sing}$ on the single items in $S$ with \[x_1\succeq^{sing}x_2\succeq^{sing}\cdots\succeq^{sing}x_m.\] Let $T\subseteq S$, and define $I_k=\{x_1,x_2,\ldots,x_k\}$ for all $k\in\{1,2,\ldots,m\}$.

(a) If $|I_k\cap T|\geq\frac{k}{2}$ for all $k\in\{1,2,\ldots,m\}$, then $T$ is necessarily agreeable with respect to $\succeq^{sing}$. The converse also holds if the preference $\succeq^{sing}$ is strict.

(b) If $T$ is possibly agreeable with respect to $\succeq^{sing}$, then $|I_k\cap T|>\frac{k}{2}$ for some $k\in\{1,2,\ldots,m\}$. The converse also holds if the preference $\succeq^{sing}$ is strict.
\end{proposition}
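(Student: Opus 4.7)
The plan is to prove (a) in full and then derive (b) directly from (a) via Proposition \ref{prop:feasnecciff}. For the forward implication of (a), I would build an injection $\phi \colon -T \to T$ with $\phi(y)\succeq^{sing} y$ for every $y\in -T$, and then use the injection to transform $-T$ into a subset of $T$ through a sequence of responsiveness-preserving swaps.

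To construct $\phi$, enumerate $-T=\{x_{j_1},x_{j_2},\ldots,x_{j_r}\}$ with $j_1<j_2<\cdots<j_r$, and greedily assign $\phi(x_{j_s})$ to be the smallest-indexed element of $T\cap I_{j_s}$ not yet used. The availability of a fresh element at step $s$ reduces to $|T\cap I_{j_s}|\geq s$. Since $\{x_{j_1},\ldots,x_{j_s}\}\subseteq -T\cap I_{j_s}$, I have $s\leq |-T\cap I_{j_s}|=j_s-|T\cap I_{j_s}|$; combining this with $|T\cap I_{j_s}|\geq j_s/2$ gives $s\leq j_s/2\leq |T\cap I_{j_s}|$, so the greedy step succeeds. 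Given $\phi$, I would define $T_0=-T$ and $T_s=(T_{s-1}\setminus\{x_{j_s}\})\cup\{\phi(x_{j_s})\}$. Injectivity of $\phi$ ensures the newly inserted element was not present in $T_{s-1}$, so responsiveness yields $T_s\succeq T_{s-1}$. Transitivity then gives $\phi(-T)\succeq -T$, and a final application of monotonicity, using $\phi(-T)\subseteq T$, yields $T\succeq -T$.

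For the strict converse of (a), assume $\succeq^{sing}$ is strict and $|I_k\cap T|<k/2$ for some $k$. Setting $a=|I_k\cap -T|$ and $b=|I_k\cap T|$, an integrality check gives $a-b\geq 1$. I would then exhibit an additive utility function $u$ satisfying $u(x_1)>\cdots>u(x_m)>0$ for which $u(-T)>u(T)$; a clean choice is $u(x_i)=1+(m-i)\epsilon$ for $i\leq k$ and $u(x_i)=(m-i+1)\epsilon$ for $i>k$, with $\epsilon$ small. The dominant contribution to $u(-T)-u(T)$ is $a-b\geq 1$ coming from $I_k$, while every other term is $O(\epsilon)$, so the difference is positive for $\epsilon$ sufficiently small. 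Since additive preferences are responsive, this produces a responsive preference under which $-T\succ T$, so $T$ is not necessarily agreeable.

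Part (b) follows by applying Proposition \ref{prop:feasnecciff} and translating: $T$ is possibly agreeable iff $-T$ is not necessarily agreeable, and the condition on $T$ in (b) is, via $|I_k\cap -T|=k-|I_k\cap T|$, exactly the negation of the condition on $-T$ appearing in (a). The strict direction of (b) transfers along the same equivalence. The main technical obstacle is the greedy feasibility count for $\phi$ together with the verification that responsiveness applies along the entire chain $T_0,\ldots,T_r$; the additive counterexample and the deduction of (b) from (a) are comparatively routine.
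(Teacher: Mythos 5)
Your proposal is correct and follows essentially the same route as the paper: the forward direction of (a) rests on the same greedy matching between $-T$ and a subset of $T$ in which each element of $-T$ is covered by a weakly preferred element of $T$ (you merely orient the map as $-T\to T$ and spell out the swap chain that the paper leaves implicit), and part (b) is obtained from part (a) via Proposition~\ref{prop:feasnecciff} exactly as in the paper. The one small divergence is in the converse of (a), where you exhibit an explicit additive utility witness with a two-scale construction rather than the paper's terser appeal to the $(\lfloor k/2\rfloor+1)$-th elements of $T$ and $-T$; both are valid, and your version is if anything more self-contained.
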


\begin{proof}
(a) Suppose first that $|I_k\cap T|\geq\frac{k}{2}$ for all $k\in\{1,2,\ldots,m\}$. Since $|I_m\cap T|\geq\frac{m}{2}$, we have that $|T|\geq|-T|$. Let $T'\subseteq T$ be the subset consisting of the $|-T|$ elements of $T$ with the smallest indices. 

Define a bijective function $f:T'\rightarrow -T$ as follows: Given the element $x_k\in T$ with the smallest index for which $f(x_k)$ is not yet defined, we define $f(x_k)$ to be the element in $-T$ with the smallest index that has not occurred in the range of $f$ so far. Since $|I_k\cap T|\geq\frac{k}{2}$ for all $k\in\{1,2,\ldots,m\}$, the function $f$ maps each element $x_k$ to another element $x_l$ with $l>k$. Hence $T$ is necessarily agreeable with respect to $\succeq^{sing}$.

Now, suppose that the preference $\succeq^{sing}$ is strict, and that $|I_k\cap T|<\frac{k}{2}$ for some $k\in\{1,2,\ldots,m\}$. Since the $(\lfloor\frac{k}{2}\rfloor+1)$-th element of $-T$ has a smaller index than the $(\lfloor\frac{k}{2}\rfloor+1)$-th element of $T$ (if the latter element exists at all), there exists a responsive preference $\succeq$ on $\mathcal{S}$ consistent with $\succeq^{sing}$ such that $-T\succ T$. Hence $T$ is not necessarily agreeable with respect to $\succeq^{sing}$.

(b) Suppose first that $T$ is possibly agreeable with respect to $\succeq^{sing}$. By Proposition \ref{prop:feasnecciff}, $-T$ is not necessarily agreeable with respect to $\succeq'$. Part (a) implies that $|I_k\cap -T|<\frac{k}{2}$ for some $k\in\{1,2,\ldots,m\}$. It follows that $|I_k\cap T|>\frac{k}{2}$ for this value of $k$.

When the preference $\succeq^{sing}$ is strict, the converse can be shown similarly to the converse of part (a).
\end{proof}

\section{Two players}
\label{sec:twoplayers}

In this section, we assume that there are two players and consider the problem of choosing a small subset that is agreeable to both players. We show that if preferences are monotonic, we can choose a subset of approximately half of the items that is agreeable to both players. Moreover, if we are given the preferences on the single items of the two players, we can choose a subset of approximately half of the items that is necessarily agreeable with respect to both preferences, and we present a polynomial-time algorithm to do so.

For arbitrary preferences that are not monotonic, the task of choosing any subset that is agreeable to two players might not be possible. Indeed, suppose that preferences are strict, and that the preference of one player is exactly the reverse of that of the other player. Then any subset is not agreeable for one of the players.

Nevertheless, when we impose the light assumption of monotonicity on the preferences, the following proposition shows that we can choose a subset of approximately half of the items that is agreeable to both players. Even though the proposition will later be generalized by Theorem \ref{thm:nplayers}, the proof of Theorem \ref{thm:nplayers} relies crucially on a theorem best known as Kneser's conjecture, which we use as a black box. We believe that a direct proof of Proposition \ref{prop:twoplayers} may well be useful both for understanding and for applying the result.

\begin{proposition}
\label{prop:twoplayers}
Assume that there are two players with monotonic preferences on $\mathcal{S}$. There exists a subset $T\subseteq S$ such that $|T|\leq\left\lceil\frac{m+1}{2}\right\rceil$ and $T$ is agreeable to both players. Moreover, there exist monotonic preferences for which the bound $\lceil\frac{m+1}{2}\rceil$ is tight.
\end{proposition}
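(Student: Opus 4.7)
The plan is to prove existence by reducing to strict preferences and arguing by contradiction, and to establish tightness with an explicit construction.

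First I would reduce to the case of strict preferences: every monotonic preference admits a strict monotonic refinement via lexicographic tie-breaking, and agreeability under the refinement implies agreeability under the original preference. So WLOG both preferences are strict; then for each player $i$, the family $A_i := \{T : T \succeq_i -T\}$ is upward-closed and contains exactly one of $T, -T$ for every $T$, giving $|A_i| = 2^{m-1}$.

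Set $h = \lceil (m+1)/2 \rceil$ and suppose for contradiction that no $h$-subset lies in $A_1 \cap A_2$; by upward closure this is equivalent to no subset of size at most $h$ being agreeable to both. For each $h$-subset $T$, exactly one of two cases holds: $T \notin A_1$ or $T \notin A_2$. In the first case, $-T \in A_1$, and by upward closure every $h$-superset $T' = -T \cup X$ with $X \subseteq T$, $|X| = 2h - m$, also lies in $A_1$; since $T' \notin A_1 \cap A_2$, each such $T'$ is not in $A_2$, so $T \setminus X = -T' \in A_2$. Upward closure then gives $T \setminus \{x\} \in A_2$ for every $x \in T$, and hence $T \in A_2$. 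The second case is symmetric, and the two cases are mutually exclusive under the contradiction hypothesis. I would then show all $h$-subsets lie in the same case: if $T$ is in the first case and $T' = (T \setminus \{x\}) \cup \{y\}$ differs by a single swap, then $T' \supseteq T \setminus \{x\} \in A_2$ forces $T' \in A_2$, hence $T' \notin A_1$. Since the graph of $h$-subsets under single swaps is connected, all $h$-subsets share the same case, WLOG the first. Then every $(m-h)$-subset arises as $T \setminus X$ for a suitable $h$-subset $T$, so lies in $A_2$; by upward closure every set of size at least $m-h$ lies in $A_2$. A binomial calculation shows $\sum_{k = m-h}^{m} \binom{m}{k} > 2^{m-1}$ for $m \geq 3$, contradicting $|A_2| = 2^{m-1}$.

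For tightness, I would exhibit monotonic preferences for which no agreeable subset of size less than $h$ exists. For $m$ odd, additive utilities suffice: player~1 places weight $(m+1)/2$ on each of the first $(m-1)/2$ items and unit weight on the rest, while player~2 is symmetric on the last $(m-1)/2$ items. A short case analysis confirms that no subset of size less than $h$ is agreeable to both, with a common agreeable subset of size $h$ obtained by combining items from both favored regions. For $m$ even, additive preferences do not suffice; instead, I would use a non-additive example in which $A_1$ consists of all subsets of size at least $h$ together with a family of size-$(h-1)$ subsets forming a transversal of complementary pairs at that size, and $A_2$ takes the opposite transversal. The two agreeable families then have empty intersection at size $h - 1$, while every set of size at least $h$ is common.

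The main obstacle is the binomial counting step in the contradiction argument: verifying $\sum_{k = m-h}^{m} \binom{m}{k} > 2^{m-1}$ requires careful use of binomial symmetry, with the central coefficient $\binom{m}{\lfloor m/2 \rfloor}$ providing the needed slack in both parities. A secondary technical point for the even-$m$ tightness construction is checking that the specified family can be realized by some monotonic preference, which follows by totally ordering subsets by size with an appropriate tie-breaker within each size level.
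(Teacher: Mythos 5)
Your argument is correct, and the existence half takes a genuinely different route from the paper's. The paper argues directly on the two weak orders: assuming no agreeable set of size $\lceil\frac{m+1}{2}\rceil$ exists (for $m=2k+1$), it proves an exchange claim showing that $T\succ_1 -T$ for a $k$-set forces the same strict preference after swapping any single pair of elements across the partition, then walks a chain of swaps from $\{x_1,\ldots,x_k\}$ to its complement and contradicts monotonicity; the even case is handled by deleting one item and reinserting it. You instead pass to a strict monotonic refinement, view each player's agreeable sets $A_i$ as an up-set of the Boolean lattice containing exactly one member of each complementary pair, and derive a contradiction by showing that $A_2$ would have to absorb all sets of size at least $m-h$, overshooting $2^{m-1}$. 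Both are sound; your version is more structural, treats both parities uniformly, and makes the ``exactly half of all subsets'' obstruction explicit, at the cost of extra machinery (the strict refinement, connectivity of the Johnson graph, and a binomial estimate). In fact your own setup admits a shortcut: once every $h$-subset and every $(m-h)$-subset lies in $A_2$, any $(m-h)$-set $U$ and its complement $-U$ are both in $A_2$, immediately violating the one-per-pair property, so the binomial sum you flag as the main obstacle can be skipped entirely. For tightness, your even-$m$ construction (order by size, opposite tie-breaking transversals at the middle level) is essentially the paper's example; your additive example for odd $m$ is a valid alternative there, though the paper's single cardinality-plus-lexicographic construction covers both parities at once.
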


\begin{proof}
Denote by $\succeq_1$ and $\succeq_2$ the preferences on $\mathcal{S}$ of the two players. 

Assume first that $m=2k+1$ is odd. Suppose for contradiction that no subset of size at most $k+1$ is agreeable to both players. Let $T\subseteq S$ be such that $|T|=k$. We begin by proving the following claim. 

\emph{Claim:} If $T\succ_1 -T$, then \[T\cup\{x\}\backslash\{x'\}\succ_1 -T\backslash\{x\}\cup\{x'\}\] for any $x\in -T$ and $x'\in T$. 

\emph{Proof of Claim:} Suppose that $T\succ_1 -T$, $x\in -T$, and $x'\in T$. It follows from monotonicity of the preferences that $T\cup\{x\}\succ_1 -T\backslash\{x\}$. Since no subset of size at most $k+1$ is agreeable to both players, we have $-T\backslash\{x\}\succ_2 T\cup\{x\}$. By monotonicity of the preferences again, we have $-T\backslash\{x\}\cup\{x'\}\succ_2 T\cup\{x\}\backslash\{x'\}$. Hence $T\cup\{x\}\backslash\{x'\}\succ_1 -T\backslash\{x\}\cup\{x'\}$, and our claim is proved. 

We now use our claim to obtain the desired contradiction. Assume without loss of generality that $\{x_1,x_2,\ldots,x_k\}\succ_1\{x_{k+1},x_{k+2},\ldots,x_{2k+1}\}$. Applying our claim repeatedly to move elements between the two sets, we find that $\{x_{k+1},x_2,\ldots,x_k\}\succ_1\{x_1,x_{k+2},\ldots,x_{2k+1}\}$, $\{x_{k+1},x_{k+2},x_3,\ldots,x_k\}\succ_1\{x_1,x_2,x_{k+3},\ldots,x_{2k+1}\}$, and so on, until finally $\{x_{k+1},x_{k+2},\ldots,x_{2k}\}\succ_1\{x_1,x_2,\ldots,x_k,x_{2k+1}\}$. By monotonicity of the preferences, we have that $\{x_{k+1},x_{k+2},\ldots,x_{2k+1}\}\succ_1\{x_1,x_2,\ldots,x_k\}$, which contradicts our assumption that $\{x_1,x_2,\ldots,x_k\}\succ_1\{x_{k+1},x_{k+2},\ldots,x_{2k+1}\}$.


Assume now that $m=2k$ is even. Let $S'$ be the set of all elements of $S$ except $x_1$. We know from the case of $m$ odd that there exists a subset $T\subseteq S'$ of size at most $k$ such that $T\succeq_1 S'\backslash T \text{ and } T\succeq_2 S'\backslash T$. Since the players have monotonic preferences, we have that $T\cup\{x_1\}\succeq_1 S'\backslash T \text{ and } T\cup\{x_1\}\succeq_2 S'\backslash T$. This means that the set $T\cup\{x_1\}$ of size at most $k+1$ is our desired subset.

Finally, we show that there exist preferences for which the bound $\lceil\frac{m+1}{2}\rceil$ is tight. Consider the preferences $\succeq_1$ and $\succeq_2$ on $\mathcal{S}$ defined as follows. For distinct subsets $T_1,T_2\subseteq S$, we have $T_1\succ_1 T_2$ if either $|T_1|>|T_2|$, or $|T_1|=|T_2|$ and $T_1$ lexicographically dominates $T_2$. Similarly, we have $T_1\succ_2 T_2$ if either $|T_1|>|T_2|$, or $|T_1|=|T_2|$ and $T_2$ lexicographically dominates $T_1$. One can check that the two preferences are monotonic, and that the bound $\lceil\frac{m+1}{2}\rceil$ is tight for these preferences.
\end{proof}


While Proposition \ref{prop:twoplayers} shows the existence of a subset of approximately half of the items that is agreeable to both players, it does not show how to obtain such a subset. If preferences are also responsive, it is possible to lift preferences over single items to partial preferences over subsets. The next theorem shows that by considering only preferences over single items, it is possible to find a subset of approximately half of the items that is guaranteed to be agreeable to both players, no matter what the (responsive) preferences are.

\begin{theorem}
\label{thm:twoplayers}
Assume that there are two players with preferences $\succeq^{sing}_1$ and $\succeq^{sing}_2$ on the single items in $S$. There exists a subset $T\subseteq S$ such that $|T|\leq\left\lceil\frac{m+1}{2}\right\rceil$ and $T$ is necessarily agreeable with respect to both $\succeq^{sing}_1$ and $\succeq^{sing}_2$. Moreover, there exist preferences on the single items in $S$ for which the bound $\lceil\frac{m+1}{2}\rceil$ is tight. We also give a polynomial-time algorithm that finds such a subset $T$.
\end{theorem}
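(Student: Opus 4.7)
The plan is to reduce the problem to a minimum vertex cover computation on a graph of maximum degree $2$. By Proposition~\ref{prop:feasneccchar}(a), it suffices to produce $T\subseteq S$ with $|T|\leq\lceil(m+1)/2\rceil$ such that $|I_k^{(i)}\cap T|\geq k/2$ for both players $i\in\{1,2\}$ and every $k\in\{1,\ldots,m\}$, where $I_k^{(i)}$ denotes the set of top $k$ items in player $i$'s single-item ranking. Writing player $i$'s ranking as $y_1^{(i)}\succeq^{sing}_i y_2^{(i)}\succeq^{sing}_i\cdots\succeq^{sing}_i y_m^{(i)}$, the inequality at an even $k$ follows from the one at the preceding odd $k$. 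A direct count then shows that the following is sufficient for both conditions to hold simultaneously: for each player $i$, the top-ranked item $y_1^{(i)}$ lies in $T$ and $T$ intersects every pair $\{y_{2j}^{(i)},y_{2j+1}^{(i)}\}$ with $1\leq j\leq\lfloor(m-1)/2\rfloor$.

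These sufficient conditions translate into a vertex cover problem. I define the graph $G$ on vertex set $S$ whose edge set is the union of the two families of pair-edges, one per player. Each family is a matching, so $G$ is a union of two matchings and has maximum degree $2$; every connected component is therefore a simple path or an even cycle. The task becomes: find a vertex cover of $G$ that contains the required set $R=\{y_1^{(1)},y_1^{(2)}\}$. Because $G$ has maximum degree $2$, a minimum such vertex cover is computable in linear time by processing each component separately with a straightforward case analysis based on its structure and on which of its vertices lie in $R$.

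The main obstacle will be verifying the size bound $\lceil(m+1)/2\rceil$. The key observation is that $y_1^{(i)}$ is not covered by player $i$'s own matching, so each vertex of $R$ has degree at most $1$ in $G$ and is either isolated or an endpoint of some path. I split into cases according to whether $y_1^{(1)}=y_1^{(2)}$ and, if distinct, whether they lie in the same connected component of $G$. In the tightest subcase, the two required vertices are endpoints of a common path $P$; forcing both into the cover costs one more than the unconstrained minimum $\lfloor v(P)/2\rfloor$ for that path, but a careful parity accounting across all components of $G$ shows that the total vertex cover size stays within $\lceil(m+1)/2\rceil$ in every case.

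For tightness, I use preferences as in Proposition~\ref{prop:twoplayers}: player $1$'s single-item ranking is $x_1\succ x_2\succ\cdots\succ x_m$ and player $2$'s is the reverse. If $m$ is odd, player $1$'s condition $|I_m^{(1)}\cap T|\geq m/2$ already forces $|T|\geq(m+1)/2$. If $m$ is even, player $1$'s condition at $k=m-1$ forces $|T\cap\{x_1,\ldots,x_{m-1}\}|\geq m/2$ while player $2$'s condition at $k=1$ forces $x_m\in T$, together giving $|T|\geq m/2+1=\lceil(m+1)/2\rceil$.
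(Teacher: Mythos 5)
Your proposal is correct but takes a genuinely different route from the paper's. The paper's proof is direct: ordering the items as $x_1\succeq^{sing}_1\cdots\succeq^{sing}_1 x_m$ (odd $m$), it puts $x_1$ into $T$ together with player~2's preferred item from each of the pairs $\{x_2,x_3\},\{x_4,x_5\},\ldots$; necessary agreeability for player~1 follows because $T$ dominates the alternating set $\{x_1,x_3,x_5,\ldots\}$, and for player~2 because each element of $-T$ is matched to a weakly $\succeq^{sing}_2$-preferred element of $T$. The even case is handled by deleting the top item and adding it back, and tightness is argued essentially as you do. Your reduction to a minimum vertex cover containing $R=\{y_1^{(1)},y_1^{(2)}\}$ in the union of the two pair-matchings is a more systematic use of Proposition~\ref{prop:feasneccchar}(a), applied symmetrically to both players; it buys a cleaner algorithmic formulation (constrained minimum vertex cover on a graph of maximum degree $2$) and will in general return a set no larger than, and often smaller than, the worst-case bound, at the price of a more delicate size analysis. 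I checked that the deferred parity accounting does close: writing $o$ for the number of path components with an odd number of vertices (isolated vertices included), the unconstrained minimum cover has size $(m-o)/2$; each of the at most two components touched by $R$ incurs an extra cost of at most $1$, and that extra cost is $0$ when the touched component is an even path containing a single $R$-endpoint; combining this with $o\equiv m\pmod 2$ and the observation that each unit of extra cost forces either an additional odd path or the collision of both $R$-vertices into one component, every case lands at or below $\lfloor m/2\rfloor+1=\lceil (m+1)/2\rceil$. Since this accounting is the only step of your argument that is asserted rather than proved, and it is exactly where the bound could fail, you should write it out in full.
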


\begin{proof}
Assume first that $m=2k+1$ is odd, and suppose without loss of generality that $x_1\succeq^{sing}_1 x_2\succeq^{sing}_1\cdots\succeq^{sing}_1 x_{2k+1}$. One can check using Proposition \ref{prop:feasneccchar} that the set $\{x_1,x_3,x_5,\ldots,x_{2k+1}\}$ is necessarily agreeable with respect to $\succeq^{sing}_1$. We choose our set $T$ of $k+1$ items as follows. 
\begin{enumerate}
\item We choose $x_1$. 
\item Between each of the $k$ pairs of items $\{x_2,x_3\},\{x_4,x_5\},\ldots,\{x_{2k},x_{2k+1}\}$, we choose the item that is preferred according to $\succeq^{sing}_2$. If $\succeq^{sing}_2$ is indifferent between any pair of items, we choose arbitrarily.
\end{enumerate}

Since our set $T$ is at least as good as the set $\{x_1,x_3,x_5,\ldots,x_{2k+1}\}$ with respect to $\succeq^{sing}_1$, and the latter set is necessarily agreeable with respect to $\succeq^{sing}_1$, $T$ is also necessarily agreeable with respect to $\succeq^{sing}_1$. Moreover, since we choose the item that $\succeq^{sing}_2$ prefers from each of the sets $\{x_2,x_3\},\{x_4,x_5\},\ldots,\{x_{2k},x_{2k+1}\}$, $T$ is also necessarily agreeable with respect to $\succeq^{sing}_2$. Hence $T$ is necessarily agreeable with respect to both $\succeq^{sing}_1$ and $\succeq^{sing}_2$.

Assume now that $m=2k$ is even. Let $S'$ be the set of all items of $S$ except $x_1$. We apply the algorithm from the case of $m$ odd to choose a set $T\subseteq S'$ of $k$ items such that $T$ is necessarily agreeable with respect to both $\succeq^{sing}_1$ and $\succeq^{sing}_2$, when the universe considered is $S'$. It follows that $T\cup\{x_1\}$ is a subset of $k+1$ items that is necessarily agreeable with respect to both $\succeq^{sing}_1$ and $\succeq^{sing}_2$, when the universe considered is $S$.

Next, we show that there exist preferences for which the bound $\lceil\frac{m+1}{2}\rceil$ is tight. If $m=2k+1$ is odd and the preferences in $\succeq^{sing}_1$ are strict, then by Proposition \ref{prop:feasneccchar}, any subset $T\subseteq S$ that is necessarily agreeable with respect to $\succeq^{sing}_1$ alone must already contain at least $k+1$ items. 

Now suppose that $m=2k$ is even, and let $\succeq^{sing}_1$ and $\succeq^{sing}_2$ be given by $x_1\succ^{sing}_1 x_2\succ^{sing}_1 \cdots\succ^{sing}_1 x_{2k}$ and $x_{2k}\succ^{sing}_2 x_{2k-1}\succ^{sing}_2 \cdots\succ^{sing}_2 x_{1}$. By Proposition \ref{prop:feasneccchar}, any subset $T\subseteq S$ that is necessarily agreeable with respect to $\succeq^{sing}_1$ alone must contain at least $k$ items, one of which is $x_1$. If $T$ contains exactly $k$ items, then it contains exactly $k-1$ items among $x_2,x_3,\ldots,x_{2k}$. Proposition \ref{prop:feasneccchar} implies that such a set $T$ is not necessarily agreeable with respect to $\succeq^{sing}_2$. Hence any subset $T\subseteq S$ that is necessarily agreeable with respect to both $\succeq^{sing}_1$ and $\succeq^{sing}_2$ must contain at least $k+1$ items, as desired.
\end{proof}


Since the algorithm in Theorem \ref{thm:twoplayers} only considers preferences on single items, it remains efficient even when the number of items is relatively large.

We can generalize the notion of necessary agreeability by defining a subset $T\subseteq S$ to be \emph{necessarily worth at least $1/k$} with respect to $\succeq^{sing}$ for some fixed positive integer $k\geq 2$ if the complement $-T$ can be partitioned into $k-1$ sets $T_1,T_2,\ldots,T_{k-1}$ so that $T\succeq T_i$ for all $i$ and any responsive preference $\succeq$ on $\mathcal{S}$ consistent with $\succeq^{sing}$. Necessary agreeability then corresponds to being necessarily worth at least $1/2$. An algorithm similar to that in Theorem \ref{thm:twoplayers} yields a subset $T\subseteq S$ such that $|T|\leq\left\lceil\frac{m+k-1}{k}\right\rceil$ and $T$ is necessarily worth at least $1/k$ with respect to both $\succeq^{sing}_1$ and $\succeq^{sing}_2$. This bound is again tight.

\section{Three or more players}
\label{sec:manyplayers}

In this section, we consider the problem of choosing a small subset that is agreeable to multiple players. We generalize Proposition \ref{prop:twoplayers} by showing that if preferences are monotonic, we can choose a subset of size approximately half of the number of items plus half of the number of players that is agreeable to all players. Moreover, if there are three players and preferences are also responsive, we present a polynomial-time algorithm for computing such a subset.

We begin with the following lemma.

\begin{lemma} 
\label{lemma:kneser}
The graph with all $k$-element subsets of $\{1,2,\ldots,n\}$ as vertices and with edges connecting disjoint sets has chromatic number $n-2k+2$ if $n\geq 2k$, and $1$ otherwise.
\end{lemma}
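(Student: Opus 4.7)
The plan is to split into the two natural cases and handle the upper and lower bounds for the interesting case separately. When $n < 2k$, any two $k$-element subsets of $\{1,2,\ldots,n\}$ must share an element by the pigeonhole principle, so the graph has no edges at all. Assuming $n \geq k$ so that the vertex set is nonempty, the chromatic number is trivially $1$, which disposes of the ``otherwise'' case.

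For $n \geq 2k$, I would establish the upper bound $\chi \leq n-2k+2$ by a short, explicit construction. Assign to each $k$-subset $A \subseteq \{1,2,\ldots,n\}$ the color
\[
c(A) := \min\bigl\{\min(A),\; n-2k+2\bigr\},
\]
which takes values in $\{1,2,\ldots,n-2k+2\}$. To verify that this is a proper coloring, suppose $c(A) = c(B) = i$ for two distinct $k$-subsets $A$ and $B$. If $i \leq n-2k+1$, then $i \in A \cap B$ by construction, so $A$ and $B$ are not disjoint. If $i = n-2k+2$, then $\min(A), \min(B) \geq n-2k+2$, so both $A$ and $B$ lie inside the $(2k-1)$-element set $\{n-2k+2, n-2k+3, \ldots, n\}$, and two $k$-subsets of a $(2k-1)$-element set must again intersect by pigeonhole. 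Hence disjoint sets receive different colors and $n-2k+2$ colors suffice.

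The matching lower bound is the hard part: it is precisely the statement of Kneser's conjecture, first proved by Lovász in 1978. My proposal is to invoke this as a black box, which is consistent with the paper's own stated intention of using Kneser's conjecture without proof. For context, every known proof ultimately rests on the Borsuk--Ulam theorem (or an equivalent combinatorial device such as Tucker's lemma): the rough idea is that a proper coloring of the Kneser graph with fewer than $n-2k+2$ colors can be converted into a continuous antipodal map from $S^{n-2k+1}$ to a space of strictly smaller dimension, contradicting Borsuk--Ulam. Reproducing this topological argument in full would be a considerable digression from the fair-division theme of the paper, so I would simply cite Lovász's theorem and treat the inequality $\chi \geq n-2k+2$ as given, combining it with the explicit coloring above to obtain the claimed equality.
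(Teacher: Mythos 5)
Your proposal is correct and matches the paper's treatment: the paper states this lemma without proof, citing Lov\'{a}sz's resolution of Kneser's conjecture as a black box, exactly as you do for the lower bound $\chi\geq n-2k+2$. Your explicit coloring $c(A)=\min\{\min(A),\,n-2k+2\}$ for the upper bound and the pigeonhole argument for the $n<2k$ case are both correct and are standard supplements that the paper simply omits.
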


Lemma \ref{lemma:kneser} was conjectured by Kneser, and is therefore best known as Kneser's conjecture \cite{Kneser55}. It was first resolved by Lov\'{a}sz using topological methods, and was later simplified several times, before Matou\v{s}ek gave a purely combinatorial proof \cite{Lovasz78,Matousek04}. The lemma will play a crucial role in our next theorem.

Proposition \ref{prop:twoplayers} shows that if there are two players, we can choose a subset of approximately half of the items that is agreeable to both players. It is natural to ask what happens when there are more than two players. Given that the preferences of the players can differ wildly, it is perhaps surprising that the price we need to pay is only approximately half an item for each additional player, as the following theorem shows.

\begin{theorem}
\label{thm:nplayers}
Assume that there are $n$ players with monotonic preferences on $\mathcal{S}$. There exists a subset $T\subseteq S$ such that $|T|\leq\min\left(\left\lceil\frac{m+n-1}{2}\right\rceil,m\right)$ and $T$ is agreeable to all $n$ players. Moreover, there exist monotonic preferences for which the bound $\min\left(\lceil\frac{m+n-1}{2}\rceil,m\right)$ is tight.
\end{theorem}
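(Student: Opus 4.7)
The plan is to split into the cases $m \leq n$, handled trivially by $T = S$ (agreeable because $-T = \emptyset$ and monotonicity gives $T \succeq_i \emptyset$), and $m \geq n+1$, where Lemma \ref{lemma:kneser} does the heavy lifting. In the nontrivial case, set $l = \lfloor (m-n+1)/2 \rfloor \geq 1$, and observe, by checking the two parity cases of $m - n + 1$, that $m - l = \lceil (m+n-1)/2 \rceil$. The goal becomes finding an $l$-element subset $U \subseteq S$ whose complement $T = S \setminus U$ is agreeable to every player; then $|T| = m - l$ is exactly the required bound.

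I will establish the existence of such $U$ by a coloring argument on the Kneser graph of Lemma \ref{lemma:kneser}. Suppose for contradiction no such $U$ exists; then for every $l$-element subset $U$ there is some player $c(U) \in \{1,\ldots,n\}$ with $U \succ_{c(U)} -U$. I claim $c$ is a proper $n$-coloring of the graph on $l$-element subsets with edges between disjoint pairs. Indeed, if two disjoint $l$-subsets $U_1, U_2$ received the same color $i$, then $U_1 \succ_i -U_1$ and $U_2 \succ_i -U_2$; disjointness yields $U_2 \subseteq -U_1$ and $U_1 \subseteq -U_2$, so monotonicity gives $-U_1 \succeq_i U_2$ and $-U_2 \succeq_i U_1$, and chaining produces the transitivity-violating loop $U_1 \succ_i -U_1 \succeq_i U_2 \succ_i -U_2 \succeq_i U_1$. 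Since $2l \leq m - n + 1 \leq m$, Lemma \ref{lemma:kneser} gives chromatic number $m - 2l + 2 \geq n + 1$, contradicting the existence of an $n$-coloring; hence a good $U$, and therefore a good $T$, exists.

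For the tightness claim, when $n \leq m + 1$, I will designate $x_1, \ldots, x_{n-1}$ as ``special'' items and the remaining $m - n + 1$ items as ``ordinary''. Let player $i$ (for $1 \leq i \leq n-1$) have a monotonic preference ranking sets first by whether $x_i \in T$ and then by cardinality, and let player $n$ rank sets by the number of ordinary items they contain. Any agreeable $T$ must contain every special item $x_i$ (otherwise $-T \succ_i T$ for that player) and at least $\lceil (m-n+1)/2 \rceil$ ordinary items (to weakly beat $-T$ in player $n$'s count), so $|T| \geq (n-1) + \lceil (m-n+1)/2 \rceil = \lceil (m+n-1)/2 \rceil$. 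When $n \geq m + 1$, making $m$ of the players each strictly demand a distinct item (with any monotonic preferences for the rest) forces $T = S$, matching the bound $m$.

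The main obstacle will be recognizing that Kneser's theorem is the right tool and calibrating $l$ so that simultaneously the chromatic number exceeds $n$ (forcing the contradiction) and $m - l$ lands exactly on $\lceil (m+n-1)/2 \rceil$. Once that alignment is fixed, the proof that $c$ is a proper coloring is the short monotonicity-plus-transitivity chain above, and the parity bookkeeping in the floor/ceiling arithmetic is routine. The tightness construction is then just enough ``bossy'' players to force specific items plus one ``counting'' player to force a majority on the rest.
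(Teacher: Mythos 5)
Your proposal is correct and takes essentially the same approach as the paper: the existence argument is the identical Kneser-graph coloring (your $l=\lfloor\frac{m-n+1}{2}\rfloor$ equals the paper's $m-k$), and the tightness construction with $n-1$ item-demanding players plus one player counting the remaining items matches the paper's as well.
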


\begin{proof}
Let $k=\lceil\frac{m+n-1}{2}\rceil$. If $k\geq m$, the set $S$ of all items has size $m=\min(k,m)$ and is agreeable to all $n$ players since preferences are monotonic. Assume from now on that $k<m$, and consider the graph $G$ with all $(m-k)$-element subsets of $\{x_1,x_2,\ldots,x_m\}$ as vertices and with edges connecting disjoint sets. If all $n$ players weakly prefer $S\backslash T$ to $T$ for some $(m-k)$-element subset $T\subseteq S$, then $S\backslash T$ is our desired subset of size $k$. 

Suppose now that for any $(m-k)$-element subset $T\subseteq S$, there exists a player who strictly prefers $T$ to $S\backslash T$. We color the vertices of $G$ with $n$ colors $1,2,\ldots,n$ in the following way. For each vertex $v$ of $G$ corresponding to a set $T$, we color it with the color corresponding to a player who strictly prefers $T$ to $S\backslash T$. If there is more than one such player, choose one arbitrarily. 

By Lemma \ref{lemma:kneser}, the chromatic number of $G$ is $m-2(m-k)+2=2k-m+2\geq n+1$. Since we colored $G$ with $n$ colors, there exist two adjacent vertices sharing the same color, say, $T_1$ and $T_2$. This means that $T_1\succ_i S\backslash T_1$ and $T_2\succ_i S\backslash T_2$ for some player $i$. Since $T_1$ and $T_2$ are disjoint, we have $T_1\subseteq S\backslash T_2$ and $T_2\subseteq S\backslash T_1$. Monotonicity of preferences now implies that $S\backslash T_1\succeq_i T_2\succ_i S\backslash T_2\succeq_i T_1$, a contradiction to $T_1\succ_i S\backslash T_1$. This case is thus impossible, which implies that we can always find our desired subset $T$.

Finally, we show that there exist monotonic preferences for which the bound $\min\left(k,m\right)$ is tight. We consider two cases.

\emph{Case 1}: $n\geq m$. Then $\min(k,m)=m$. For $i=1,2,\ldots,m$, let the preference of player $i$ be such that she prefers item $x_i$ to the subset $S\backslash\{x_i\}$ of the remaining items. Such a monotonic preference exists, e.g., the preference in Example \ref{ex:onepref}. Then any subset $T\subseteq S$ that is agreeable to player $i$ must contain item $i$. Hence a subset $T$ that is agreeable to all players must contain all $m$ items.

\emph{Case 2}: $n< m$. Then $\min(k,m)=k$. For $i=1,2,\ldots,n-1$, let the preference of player $i$ be such that she prefers item $x_i$ to the subset $S\backslash\{x_i\}$ of the remaining items. Such a monotonic preference exists, e.g., the preference in Example \ref{ex:onepref}. Let the preference of player $n$ depend solely on the presence of items $n,n+1,\ldots,m$. In particular, player $n$ prefers to have more of these items than less. 

For $i=1,2,\ldots,n-1$, any subset $T\subseteq S$ that is agreeable to player $i$ must contain item $i$. Also, any subset $T$ that is agreeable to player $n$ must contain at least half of the items $n,n+1,\ldots,m$. Hence a subset $T$ that is agreeable to all players must have size at least $n-1+\left\lceil\frac{m-n+1}{2}\right\rceil=k$, as desired.
\end{proof}


We again ask the question of how to obtain a subset whose existence is guaranteed by Theorem \ref{thm:nplayers}. The following theorem provides a polynomial-time algorithm in the case of three players with responsive preferences. Since preferences on subsets, unlike preferences on single items, might not have a succinct representation, we assume that the algorithm is allowed to make a polynomial number of queries to a preference oracle. In each query, the algorithm can specify a player and two subsets of items to the preference oracle, and the oracle reveals the preference of that player between the two subsets.

The algorithm presented in the theorem is weaker than the one for two players in Theorem \ref{thm:twoplayers} in that it relies not only on preferences on single items but also on preferences on subsets.

\begin{theorem}
\label{thm:threeplayers}
Assume that there are three players with responsive preferences $\succeq_1$, $\succeq_2$, and $\succeq_3$ on $\mathcal{S}$. There exists a subset $T\subseteq S$ such that $|T|\leq\left\lceil\frac{m}{2}\right\rceil+1$ and $T$ is agreeable for all three players. Moreover, there exist responsive preferences for which the bound $\lceil\frac{m}{2}\rceil+1$ is tight. We also give a polynomial-time algorithm that finds such a subset $T$. 
\end{theorem}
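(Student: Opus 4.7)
The existence part is a direct consequence of Theorem \ref{thm:nplayers} with $n = 3$: in this case $\min(\lceil (m+n-1)/2 \rceil, m) = \lceil m/2 \rceil + 1$ for all $m \geq 2$. For tightness, I would instantiate the construction from Case 2 of the proof of Theorem \ref{thm:nplayers} with three players: players 1 and 2 each have a responsive preference of the form in Example \ref{ex:onepref} with distinct top items $x_1$ and $x_2$, and player 3's preference depends only on the cardinality of $T \cap \{x_3, \ldots, x_m\}$, increasing in that cardinality. Any agreeable subset then contains both $x_1$ and $x_2$ and at least $\lceil (m-2)/2 \rceil$ of the remaining items, achieving the bound $\lceil m/2 \rceil + 1$.

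For the polynomial-time algorithm, the plan is a two-phase procedure guided by oracle queries. First, extract the singleton preferences $\succeq_i^{sing}$ of all three players with $O(m^2)$ pairwise oracle queries, then apply the algorithm of Theorem \ref{thm:twoplayers} to players 1 and 2 to obtain a set $T_0$ of size $\lceil (m+1)/2 \rceil$ that is necessarily agreeable to both. Since $\lceil (m+1)/2 \rceil \leq \lceil m/2 \rceil + 1$, if a single oracle query confirms $T_0 \succeq_3 -T_0$ we are done; otherwise we need to modify $T_0$ to accommodate player 3 while preserving the agreeability for players 1 and 2 secured by Theorem \ref{thm:twoplayers}.

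The second phase enumerates a polynomial family of perturbations of $T_0$: single-item additions $T_0 \cup \{x\}$ for $x \in -T_0$ when $m$ is odd (there is a one-item size budget), and single-item swaps $T_0 \setminus \{y\} \cup \{x\}$ when $m$ is even (the size budget is tight). For additions, monotonicity ensures that agreeability for players 1 and 2 is preserved automatically; for swaps we verify preservation using the characterization of necessary agreeability from Proposition \ref{prop:feasneccchar} (the swap must keep enough of the top items of players 1 and 2). In either case, the oracle is used to check agreeability for player 3, and we return the first successful candidate.

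The main obstacle is correctness of the second phase: showing that this polynomial family always contains a set agreeable to all three players. Theorem \ref{thm:nplayers}, proved via Kneser's lemma (Lemma \ref{lemma:kneser}), guarantees existence somewhere in the full subset lattice, but not that the witness is a small perturbation of $T_0$. Bridging this gap requires a structural argument exploiting responsiveness of player 3: roughly, one shows that among the candidate perturbations at least one yields an injection from the complement into the new set weakly dominated under $\succeq_3^{sing}$, which by responsiveness implies agreeability for player 3. Small explicit examples show that the naive choice---e.g., adding player 3's top item in $-T_0$---can fail, so the argument will require a careful case analysis, possibly exploiting residual flexibility in the choice of $T_0$ (for instance, refining the pairing used in Theorem \ref{thm:twoplayers} so that items lying high in $\succeq_3^{sing}$ end up in $T_0$ whenever consistent with the two-player construction).
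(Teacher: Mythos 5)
Your existence and tightness arguments are fine and match the paper's: existence is read off from Theorem \ref{thm:nplayers}, and the tight example is the $n=3$ instantiation of Case~2 there, made responsive. The gap is in the algorithm, and it is fatal rather than a missing detail. Your second phase starts from a set $T_0$ determined entirely by the preferences of players 1 and 2 and then searches over single-item additions (or swaps). But player 3's responsive preference can be adversarial relative to $T_0$: take it to be additive with value $1$ on every item of $-T_0$ and $0$ on every item of $T_0$. Then for any $x$, the set $T_0\cup\{x\}$ has player-3 value at most $1$ while its complement has value $|-T_0|-1$, which exceeds $1$ once $m\geq 7$; single swaps fare no better. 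So the polynomial family you enumerate provably need not contain a set agreeable to player 3, and no ``injection'' argument can rescue it. The residual flexibility you point to (breaking player 2's ties in favor of player 3) vanishes when players 1 and 2 have strict preferences, so it does not close the gap either. The general lesson is that an agreeable set for three players can be far (in Hamming distance) from every agreeable set for two of them.

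The paper's construction avoids this by giving player 3 a \emph{binary choice between complementary halves} rather than a local perturbation. For $m=2k$ it reserves two special items ($x_{2k-1}$, player 1's favorite, and $x_{2k}$, player 2's favorite among the rest), pairs the remaining $2k-2$ items consecutively in player 1's ranking, and then runs a local-search exchange on transversals of these pairs to find a transversal $E$ such that both $E\cup\{x_{2k}\}\succeq_2 A\setminus E$ and $(A\setminus E)\cup\{x_{2k}\}\succeq_2 E$. The output is $\{x_{2k-1},x_{2k}\}$ together with whichever of $E$ or $A\setminus E$ player 3 prefers. Player 3's agreeability is then automatic by monotonicity (she picked the better half and gets both extras), player 2's holds for either half by the exchange invariant, and player 1's holds for any transversal by Proposition \ref{prop:feasneccchar}. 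If you want to salvage your plan, you would need to replace the perturbation phase with something of this flavor: a search over a polynomial family that is guaranteed to contain, for \emph{every} responsive $\succeq_3$, a set player 3 weakly prefers to its complement --- and complementary halves are exactly such a family.
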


\begin{proof}
The existence of a subset $T\subseteq S$ such that $|T|\leq\lceil\frac{m}{2}\rceil+1$ and $T$ is agreeable for all three players follows from Theorem \ref{thm:nplayers}. The tightness of the bound also follows, since one can make the preferences used to show tightness in Theorem \ref{thm:nplayers} responsive. Next, we give an algorithm to find such a subset $T$.

Assume first that $m=2k$ is even. Suppose without loss of generality that $x_{2k-1}$ is the most preferred item according to $\succeq_1$, $x_{2k}$ is the most preferred item besides $x_{2k-1}$ according to $\succeq_2$, and that among the remaining $2k-2$ items, the preference $\succeq_1$ ranks them as $x_1\succeq_1 x_2\succeq_1\cdots\succeq_1 x_{2k-2}$. One can check using Proposition \ref{prop:feasneccchar} that the set $\{x_{2k-1},x_2,x_4,x_6,\ldots,x_{2k-2},x_{2k}\}$ is agreeable to player 1.

Let $A=\{x_1,x_2,\ldots,x_{2k-2}\}$, and consider the pairs $(x_1,x_2),(x_3,x_4),\ldots,(x_{2k-3},x_{2k-2})$. Let $B$ be a set of $k-1$ elements containing an item from each pair that is less preferred by $\succeq_2$ than its pair. If $\succeq_2$ prefers the two items in some pair equally, we choose an arbitrary item from that pair.

While $A\backslash B\succeq_2 B$, we remove an element from $B$ that is less preferred by $\succeq_2$ than its pair, and insert its pair into $B$. Since the preference is responsive and there are finitely many items, we eventually reach a point where $B\succeq_2 A\backslash B$. We consider two cases.

\emph{Case 1}: We have not performed any insertion into or removal from $B$. By definition of $B$, we have that $A\backslash B\succeq_2 B$, and therefore $A\backslash B\sim_2 B$. Since preferences are monotonic, it follows that $A\backslash B\cup\{x_{2k}\}\succeq_2 B$ and $B\cup\{x_{2k}\}\succeq_2 A\backslash B$.

\emph{Case 2}: We have performed some insertion into or removal from $B$. Suppose that our last insertion was on $x_{2i-1}$ and our last removal was on $x_{2i}$. Let $C=A\backslash(B\cup\{x_{2i}\})\cup\{x_{2i-1}\}$ and $D=B\backslash\{x_{2i-1}\}\cup\{x_{2i}\}$. We have that $C\succ_2 D$ and $B\succeq_2 A\backslash B$, and it follows from the monotonicity of preferences that $C\cup\{x_{2k}\}\succeq_2 D$ and $B\cup\{x_{2k}\}\succeq_2 A\backslash B$. We will show that at least one of $D\cup\{x_{2k}\}\succeq_2 C$ and $A\backslash B\cup\{x_{2k}\}\succeq_2 B$ holds.

Suppose for contradiction that $C\succ_2 D\cup\{x_{2k}\}$ and $B\succ_2 A\backslash B\cup\{x_{2k}\}$. Then by the responsiveness of preferences, we have $C\succ_2 D\cup\{x_{2k}\}\succeq_2 B\succ_2 A\backslash B\cup\{x_{2k}\}\succ_2 C$, a contradiction. Hence at least one of $D\cup\{x_{2k}\}\succeq_2 C$ and $A\backslash B\cup\{x_{2k}\}\succeq_2 B$ holds.

In both Cases 1 and 2, we can find a subset $E\subseteq A$ of $k-1$ elements containing an item from each of the pairs $(x_1,x_2),(x_3,x_4),\ldots,(x_{2k-3},x_{2k-2})$ such that $E\cup\{x_{2k}\}\succeq_2 A\backslash E$ and $A\backslash E\cup\{x_{2k}\}\succeq_2 E$. This set $E$ will be crucial in the description of our algorithm.

We choose our set $T$ of $k+1$ items as follows. 
\begin{enumerate}
\item We include $x_{2k-1}$ and $x_{2k}$ in $T$. 
\item We include in $T$ either $E$ or $A\backslash E$ according to which one player 3 prefers. (If player 3 is indifferent between the two sets, we choose one of them arbitrarily.)
\end{enumerate}

Next, we claim that our chosen set $T$ is agreeable for all three players. We prove the claim separately for each of the players.
\begin{itemize}
\item From the perspective of player 1, the worst set $T$ we could have chosen is \[\{x_{2k-1},x_2,x_4,x_6,\ldots,x_{2k-2},x_{2k}\}.\] Since this set is agreeable to player 1 and preferences are responsive, our set $T$ is agreeable to player 1.
\item Since $E\cup\{x_{2k}\}\succeq_2 A\backslash E$ and $A\backslash E\cup\{x_{2k}\}\succeq_2 E$ and we include the remaining item $x_{2k-1}$, $T$ is also agreeable for player 2.
\item Since we choose the set $E$ or $A\backslash E$ that player 3 prefers and we include both of the remaining items $x_{2k-1}$ and $x_{2k}$, $T$ is also agreeable for player 3.
\end{itemize}
Hence $T$ is agreeable for all three players, as desired. This concludes the case $m$ even.

Assume now that $m=2k+1$ is odd. Let $S'$ be the set of all items of $S$ except $x_1$. We apply the algorithm from the case of $m$ even to choose a set $T\subseteq S'$ of $k+1$ items such that $T$ is agreeable for all three players, when the universe considered is $S'$. It follows that $T\cup\{x_1\}$ is a subset of $k+2$ items that is agreeable for all three players, when the universe considered is $S$.
\end{proof}

Even though the algorithm in Theorem \ref{thm:threeplayers} also relies on preferences on subsets, the number of subsets involved in the algorithm is only linear in the number of items, hence the algorithm remains efficient even when the number of items is relatively large.

One may ask whether we may restrict the algorithm in Theorem \ref{thm:threeplayers} to rely only on preferences on single items similarly to the case of two players. The following example shows that the answer is negative. In particular, a necessarily agreeable subset of size at most $\min\left(\left\lceil\frac{m+n-1}{2}\right\rceil,m\right)$ does not always exist.

\begin{example}
\label{ex:linear}
Suppose that there are six items $x_1,\ldots,x_6$, and consider the following preferences $\succeq_1,\succeq_2,\succeq_3$ of three players on the single items:
\[\succeq_1:x_1\succ x_4\succ x_5\succ x_6\succ x_2\succ x_3,\]
\[\succeq_2:x_2\succ x_5\succ x_6\succ x_4\succ x_3\succ x_1,\]
\[\succeq_3:x_3\succ x_6\succ x_4\succ x_5\succ x_1\succ x_2.\]
\end{example}

If we had access to the preferences on subsets, the algorithm in Theorem \ref{thm:threeplayers} would yield a subset of $\left\lceil\frac{6}{2}\right\rceil+1=4$ items that is agreeable to all three players. However, to obtain a necessarily agreeable subset, we need to include all of $x_1,x_2,x_3$ as they are most preferred by at least one player. Moreover, choosing only one of $x_4,x_5,x_6$ does not yield a necessarily agreeable subset for the player who ranks that item fourth. Hence a necessarily agreeable subset contains at least five items.

\section{Future Work}
\label{sec:conclusion}


Our work suggests a number of possible future directions. With the polynomial-time algorithms for two and three players in mind, a natural question to ask is whether we can similarly obtain efficient algorithms for more players. The algorithm for three players is already quite involved, so one might suspect that the problem is intractable for larger numbers of players. If that were to be the case, it would be useful to have a confirmation by means of a hardness result, even for some fixed large number of players. Since the problem is a search problem in which we know that a solution always exists, the problem would not be NP-hard, but could potentially be hard with respect to some subclass of TFNP such as PPAD or PLS. One could also ask about the NP-hardness of deciding the existence of an agreeable subset of some smaller size for which there is no guarantee of existence.

Since the algorithm in Theorem \ref{thm:threeplayers} relies on preferences on subsets, we can consider a different setting in which we are not given access to preferences on subsets but only to preferences on single items. In this setting, we need to choose a subset that is necessarily agreeable to all three players, and as Example \ref{ex:linear} shows, this may well force us to choose more items than we need to in the case where we have access to preferences on subsets. We can ask how many items we need to choose in the worst case, and whether there exists an efficient algorithm to compute those items. Of course, the same question can be asked for the corresponding setting with more players as well.

An interesting related question that goes beyond our setting is the following:  When is it possible to obtain an envy-free allocation between two (or more) parties if there are multiple players in each party, possibly with wildly differing preferences? To the best of our knowledge, the fair-division literature has so far only focused on settings in which each party consists of a single player. Whether we can come up with an algorithm along the lines of the undercut procedure \cite{BramsKiKl12} to find an envy-free allocation in the multiple-player setting when one exists is an appealing direction that we leave for future research.

\bibliographystyle{abbrv}
\bibliography{agreeable} 

\end{document}